\algnewcommand\algorithmicforeach{\textbf{for each}}
\newcommand{\STAB}[1]{\begin{tabular}{@{}c@{}}#1\end{tabular}}
\newtheorem{proposition}{Proposition}[section]
\tikzstyle{arrow} = [thick,->,>=stealth]
\newtheorem{theorem}{Theorem}[section]
\begin{document}
\title{Cooperative Network Coding for Distributed Storage using Base Stations with Link Constraints}



\author{\IEEEauthorblockN{Suayb S. Arslan\IEEEauthorrefmark{1},
Massoud Pourmandi\IEEEauthorrefmark{2} and
Elif Haytaoglu\IEEEauthorrefmark{3}}

\IEEEauthorblockA{\IEEEauthorrefmark{1} Dept. of Computer Engineering,
MEF University, Maslak, Istanbul, TR} \\
\IEEEauthorblockA{\IEEEauthorrefmark{2} Dept. of EE, Boğaziçi University University, Bebek, Istanbul, TR} \\
\IEEEauthorblockA{\IEEEauthorrefmark{3} Dept. of Computer Engineering, Pamukkale University, Denizli, TR}
\thanks{This study is supported by TUBITAK under grant no 119E235.}}


\maketitle

\begin{abstract}
In this work,  we consider a  novel distributed data storage/caching scenario in a  cellular setting where multiple nodes may fail/depart at the same time. In order to maintain the target reliability, we allow cooperative regeneration of lost nodes with the help of base stations allocated in a set of hierarchical layers.  Due to this layered structure,  a  symbol download from each base station has a different cost, while the link capacities connecting the nodes of the cellular system and the base stations are also limited. In this more practical and general scenario,  we present the fundamental trade-off between repair bandwidth cost and the storage space per node.  Particularly interesting operating points are the minimum storage as well as bandwidth cost points in this trade-off curve. We provide closed-form expressions for the corresponding bandwidth  (cost)  and storage space per node for these operating points.  Finally,  we provide an explicit optimal code construction for the minimum storage regeneration point for a given set of system parameters.
\end{abstract}


\section{Introduction}

Scalability and reliability of distributed data storage and caching are quite crucial for next generation cellular system design. In such settings, nodes that store information may fail or leave the cell permanently. For fault tolerance, the data is usually cached in erasure coded form in the cell. Moreover, in order to maintain the target reliability (or service quality in a constant distributed network), we need to regenerate the unavailable content on a periodic basis. A data of size $F$ symbols is erasure coded to provide required durability and distributed across different $n$ mobile nodes of the local cell where each node stores $\alpha$ units of data. In the case of a failure or node departure, either an instantaneous repair process regenerate lost node contents as soon as they become unavailable or  a lazy repair may be adapted in which no regeneration take place during a time duration to accumulate losses and initiate a cooperative regeneration. The latter helps reduce the required bandwidth dramatically \cite{Kermarrec2011},\cite{Shum2013}.

In any lazy cooperative repair scheme, there are two phases. Suppose that at the time of regeneration there are $t$ failed/departed nodes accumulated so far. Newcomer nodes are regenerated by first contacting $d \leq  n - t$ other live nodes to download $\beta$ units of data to resurrect the lost data content. In the second phase, after each newcomer processes the downloaded data, the $t$ newcomers exchange data among themselves, each by downloading $\beta^\prime$ units of data from  the rest of the other $t-1$ newcomers.

In this paper, we consider the availability of multiple base stations which cache the stored content in the nodes of the local cellular network. In particular, we introduce  two novel concepts, 
namely \textit{bandwidth cost} and \textit{link capacity} in addition to cooperation. More specifically, we constrain the maximum  link capacity limit to be $b_l \beta$ to be downloaded from the $l$-th base station. The bandwidth cost defines the cost we pay per downloaded symbols from each base station. Although there are similarities to previous rack-aware configurations \cite{sohn2019,hou2019,Tebbi2019,zhang2021rackaware, Hou2020}, the essence of the problem is more general as it encompasses cost notion and capacity-constrained communication channels. Moreover, these configurations do not address the cooperative regeneration and/or minimizing the bandwidth cost in their rack-aware architecture.

The rest of the paper is organized as follows. In Section II, we provide our motivation with an example coding scenario in which base station assistance is beneficial, and then we introduce  the problem statement formally. In Section III, we provide our main result that characterizes the trade-off between the bandwidth cost versus storage space per cellular node. We explore two important operating points in the trade-off curve; namely base station assisted minimum storage and minimum bandwidth cost cooperative regeneration (namely BS-MSCR and BS-MBCCR, respectively).  In Section IV, we provide a code construction for BS-MSCR operating point based on  MDS codes. Finally, we conclude our paper in Section V.

\section{Motivation and System Model}

\subsection{Motivating Example}
Let us consider a recurring example in the literature whereby out of four storage nodes, two nodes fail and we attempt cooperative regeneration of data in these nodes. In that setting, each node stores two packets where each packet is assumed to be 1MB ($F=4$ MB). The first node stores $A_1,A_2$, second node stores $B_1,B_2$, third node contains $A_1+B_1$ and $2A_2+B_2$. Finally, the fourth node stores $2A_1+B_1$ and $A_2+B_2$. Assume further that the second and fourth nodes fail. To recover both nodes the classical way and simultaneously, they need to contact the remaining alive nodes and download four packets per lost node (4MB) for the regeneration process. However, studies such as \cite{Kermarrec2011} and \cite{Shum2013} showed that cooperation can help reduce the amount of data exchange to 3MB per node. As can be seen in Fig. \ref{Ex1}, an example introduced in \cite{Shum2013}, the top newcomer downloads $A_1$ and $A_1+B_1$ and computes $B_1$ for local storage and $2A_1+B_1$ to transfer to other newcomer. Similarly, the bottom newcomer downloads $A_2$ and $2A_2+B_2$ from alive nodes and compute $A_2+B_2$ for local storage and $B_2$ to transfer to other newcomer.

\begin{figure}[t!]
\includegraphics[width=0.4\columnwidth, height=2.3in]{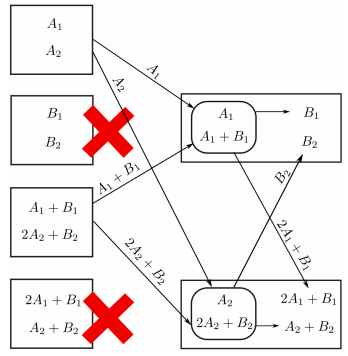}
\centering
\caption{ Cooperative regeneration of multiple failures \cite{Shum2013}.} \label{Ex1}
\end{figure}

Let us assume a two-layer scenario in which different base stations are involved in the recovery process. We partition packets to half packets (0.5MB) and assume BS communication to be $w_1>1$ more costly compared to local communication. Also, communication with the satellite/cloud layer costs $w_2>w_1>1$ whereby BS is able to provide at most $r$ fraction of local downloaded data ($\beta$) and satellite/cloud layer stores the entire file. 
This constraint on the link capacity implies that the number of bytes that can be communicated with higher layer nodes is a linear function of that of local nodes ($\beta$)\footnote{This constraint will be a parameter of our work later on.}. 
In that case, one packet transfer between local nodes and higher layers correspond to $w_1$ or $w_2$ packet transfers solely between local nodes subject to this limitation. 

We provide an example scenario in Fig. \ref{Ex2} in which half packets for $A_1$ namely, $A_1^1$ and $A_1^2$ and for $A_2$ namely, $A_2^1$ and $A_2^2$ are summed before transmission to newcomers. The top newcomer initially resurrect $B_1^1+B_1^2$ and downloads from the other newcomer $B_2^1+B_2^2$. Finally by downloading $B_1^2$ from the satellite/cloud layer  and $B_2^2$ from the BS, it would be able to complete the repair process. Similarly, the bottom newcomer initially computes $A_2^1+B_2^1+A_2^2+B_2^2$, and through downloading $2A_1^1+B_1^1+2A_1^2+B_1^2$ from the other newcomer and the half packets $2A_1^2+B_1^2$ from the satellite/cloud layer  and $A_2^2+B_2^2$ from the BS, respectively, it would be able to complete the repair process. Overall a total bandwidth cost of $3+w_1+w_2$ 
is experienced for the exact repair. In Table \ref{Table1}, we compare the bandwidth cost per repaired node  using this coding scheme (CoopLayer) based on the example in \cite{Shum2013} (CoopLocal) with $r = 1$. 
We have also included no cooperation (NoCoopLocal) and a protocol which only contacts higher layers and download the node content directly from them without local assistance (FullLayer). As can be seen, the coding scheme we present downloads $0.5$MB from the BS and $0.5$MB from the satellite/cloud layer corresponding to $r=1$ fraction of the local download ($\beta = 0.5$MB) and performs best among other schemes. In this study, we will argue that the presented coding scheme uses both layers and is optimal for $r=1$ for the weight values of Scenario 1 given in Table \ref{Table1}. In this case, the cost of CoopLayer is the sum of $1$ unit cost  from the helper nodes, $0.5$ unit cost from the cooperation phase, $0.55$ from the BS and $0.85$ from the satellite layers. However it turns out that in Scenario 2, using different weight distribution, it is optimal to use only the BS in the repair process. In other words, using $r = 1$ and smaller packetizations, we can show that there exists an optimal coding method that can achieve a bandwidth cost of as low as  $43/15 \approx 2.867$ (corresponds to $8/3 \approx 2.66$ MBs). 

\begin{figure}[t!]
\includegraphics[width=0.7\columnwidth, height=3in]{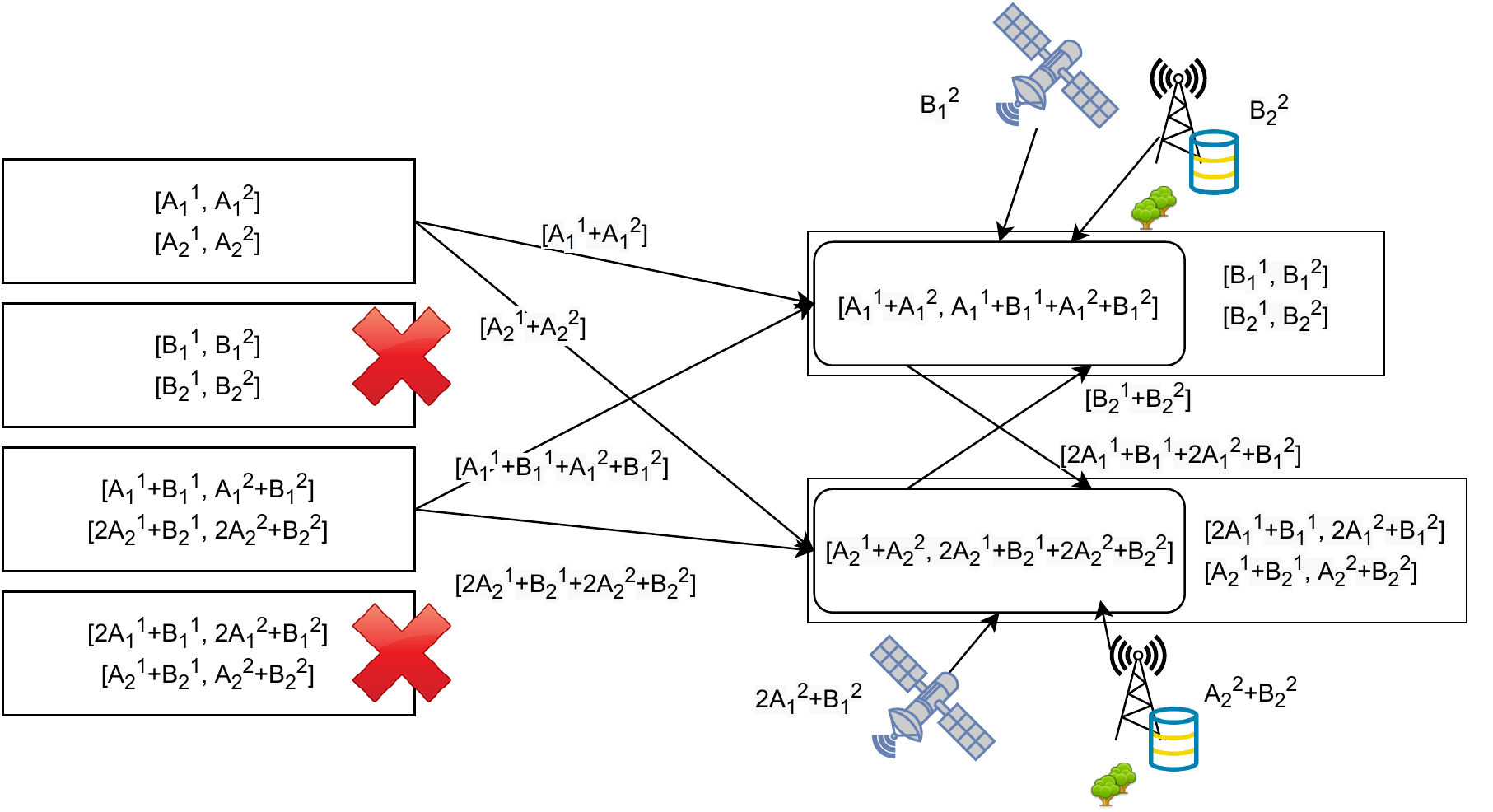}
\centering
\caption{Cooperative regeneration of multiple failed nodes with the help of a base station and a satellite/cloud layer.} \label{Ex2}
\vspace{-5mm}
\end{figure}

\begin{table}[b!]
  \centering
  \begin{tabular}{|c|l|l|l|r|r|r|r|}
    \Xhline{4\arrayrulewidth}
     &  \multicolumn{1}{c|}{$\beta$} & \multicolumn{1}{c|}{ $w_1$} & \multicolumn{1}{c|}{$w_2$} & \multicolumn{1}{c|}{\STAB{\rotatebox[origin=c]{90}{NoCoopLocal}}} & \multicolumn{1}{c|}{\STAB{\rotatebox[origin=c]{90}{CoopLocal}}} & \multicolumn{1}{c|}{\STAB{\rotatebox[origin=c]{90}{CoopLayer}}} & \multicolumn{1}{c|}{\STAB{\rotatebox[origin=c]{90}{FullLayer}}} \\ \Xhline{4\arrayrulewidth}
          \multirow{1}{*}{\STAB{\rotatebox[origin=c]{0}{Scenario 1}}}
      & 0.5 MB & 1.1  & 1.7                    &   4                        &       3                    &         2.9                  &          3.1                  \\ \hline
     \multirow{1}{*}{\STAB{\rotatebox[origin=c]{0}{Scenario 2}}}
      & 2/3 MB & 1.3  & 2.1                   &    4                       &       3                    &         2.867                  &          3.667               \\ \Xhline{4\arrayrulewidth}
  \end{tabular}
  \vspace{1mm}
  \caption{Bandwidth cost per newcomer for cooperative, local and BS-assisted cooperative regeneration schemes ($r=1$).}
\label{Table1}
\end{table}

\subsection{System Model and Information Flow Diagram}

We consider a local cell with $n$ storage nodes and multiple upper layers serving as back-up cluster nodes, each with size $n_l$  with  $l=1,..,M$, such that individual clusters are assigned a cost factor $w_l$. We suppose that clusters are stacked according to non-descending costs, therefore cluster with label 0 would correspond to local cluster and cluster with label $M$ is the cluster that has highest cost factor. We further assume each link to have limited capacity $b_l$ and the number of symbols that can be downloaded (${\beta_l}^{\prime\prime}$)  is constrained i.e., ${\beta_l}^{\prime\prime } \leq b_l \beta$.

In our study we invoke information flow graphs to characterize the trade-off region, which are graphical representations of a network in which some information sources are  multicast to a set of destination sinks through intermediate nodes. An information flow graph can be described by three types of nodes namely \textit{source nodes}, \textit{intermediate nodes} and \textit{sink nodes}. In general, for the case of single source multiple sink nodes problem, achievable capacity of the network is characterized through max-flow min-cut  theorem \cite{Ahlswede2000},\cite{Dimakis2010}. In general case  of multiple source nodes, achievable capacity  is still an open problem.  As a pioneering work, Alswede \textit{et al.} \cite{Ahlswede2000} characterized the min-cut max-flow  with network coding and proved that linear network codes are sufficient to achieve the min-cut bound in multicast problem with a single source. In the context of a distributed storage system, a directed acyclic graph called an information flow graph $\mathcal{G}$ is used. As mentioned before, $\mathcal{G}$ consists of a source $\mathbf{S}$, intermediary
nodes, and data collectors ($\mathbf{DC}$) which only contact any $k$ devices for the full reconstruction of the stored file.

We characterize our proposed information flow graph with $(n,k,d,t,\alpha,\gamma,\{ {r_l}\},\{ {w_l}\})$ tuple using $M$ distinct layers, different than the one in \cite{Shum2013}. An example graph is illustrated in Fig. \ref{figflow} for $M=2$ case. Besides, for a given code parameters $(n,k,d,t,\alpha,\gamma)$ , we denote all possible graphs with $\mathcal{G}(n,k,d,t;\alpha,\beta,\beta^\prime,\{ {\beta _l}^{\prime \prime }\} _{l = 1}^M)$. Data file $\mathcal{F}$ with size $F$ is represented by the source node $\mathbf{S}$ at stage $s=-1$. Initial nodes are represented by pairs of vertices $(x_{I{n_j}}^i,x_{Ou{t_j}}^i)$ at stage $s=0$, where \(i\) and \(j\) are the time index and device label introduced at time index  \(i\), respectively. Secondary nodes are represented by tuple $(x_{I{n_j}}^i,x_{Coo{p_{j,1}}}^i,x_{Coo{p_{j,2}}}^i,x_{Ou{t_j}}^i)$ with stage $s \ge 1$. Finally, data collectors are represented by single node with stage $s \ge 1$.


\begin{figure}[t!]
  \centering 
  \resizebox{6.3in}{4in}{%
\begin{tikzpicture}[line cap=round,line join=round,>=triangle 45,x=1.2cm,y=1.2cm]
\draw [line width=0.75pt] (-7,3) circle (0.3cm);
\draw [line width=0.75pt] (0.5,2) circle (0.3cm);
\draw [line width=0.75pt] (-6,4) circle (0.2cm);
\draw [line width=0.75pt] (-5,5) circle (0.25cm);
\draw [line width=0.75pt] (-5,6) circle (0.25cm);
\draw [line width=0.75pt] (-6,3) circle (0.2cm);
\draw [line width=0.75pt] (-6,2) circle (0.2cm);
\draw [line width=0.75pt] (-6,1) circle (0.2cm);
\draw [line width=0.75pt] (-5,4) circle (0.2cm);
\draw [line width=0.75pt] (-5,3) circle (0.2cm);
\draw [line width=0.75pt] (-5,2) circle (0.2cm);
\draw [line width=0.75pt] (-5,1) circle (0.2cm);
\draw (-7,3) node[anchor=center] {$\mathbf{S}$};
\draw (0.5,2) node[scale=0.7,anchor=center] {$\mathbf{DC}$};
\draw (-6,4) node[scale=0.5,anchor=center] {${I{n_1}}$};
\draw (-5,4) node[scale=0.5,anchor=center] {$Ou{t_1}$};
\draw (-5,5) node[scale=0.5,anchor=center] {$B{S_1}$};
\draw (-5,6) node[scale=0.5,anchor=center] {$B{S_2}$};
\draw (-6,3) node[scale=0.5,anchor=center] {$I{n_2}$};
\draw (-5,3) node[scale=0.5,anchor=center] {$Ou{t_2}$};
\draw (-6,2) node[scale=0.5,anchor=center] {$I{n_3}$};
\draw (-5,2) node[scale=0.5,anchor=center] {$Ou{t_3}$};
\draw (-6,1) node[scale=0.5,anchor=center] {$I{n_4}$};
\draw (-5,1) node[scale=0.5,anchor=center] {$Ou{t_4}$};
\draw (-3.5,4) node[scale=0.5,anchor=center] {$I{n_5}$};
\draw [line width=0.75pt] (-3.5,4) circle (0.2cm);
\draw (-2.5,4) node[scale=0.4,anchor=center] {$Coo{p_{5,1}}$};
\draw [line width=0.75pt] (-2.5,4) circle (0.26cm);
\draw (-1.5,4) node[scale=0.4,anchor=center] {$Coo{p_{5,2}}$};
\draw [line width=0.75pt] (-1.5,4) circle (0.26cm);
\draw (-0.5,4) node[scale=0.5,anchor=center] {$Ou{t_5}$};
\draw [line width=0.75pt] (-0.5,4) circle (0.2cm);
\draw (-3.5,3) node[scale=0.5,anchor=center] {$I{n_6}$};
\draw [line width=0.75pt] (-3.5,3) circle (0.2cm);
\draw (-2.5,3) node[scale=0.4,anchor=center] {$Coo{p_{6,1}}$};
\draw [line width=0.75pt] (-2.5,3) circle (0.26cm);
\draw (-1.5,3) node[scale=0.4,anchor=center] {$Coo{p_{6,2}}$};
\draw [line width=0.75pt] (-1.5,3) circle (0.26cm);
\draw (-0.5,3) node[scale=0.5,anchor=center] {$Ou{t_6}$};
\draw [line width=0.75pt] (-0.5,3) circle (0.2cm);
\draw [arrow] (-6.822256648517049,3.2416760248837444) -- (-6.17606064562056,3.9051177094307294);
\draw [arrow] (-6.7003013757649095,3.0134437580904336) -- (-6.199630684830645,3.0121486490626728);
\draw [arrow] (-6.754544033688956,2.827514149559126) -- (-6.184618297688359,2.0769160851750312);
\draw [arrow] (-6.935940014457439,2.7069192632527908) -- (-6.175365243042706,1.0961614867427316);
\draw [arrow] (-5.800026612852102,3.996737419334807) -- (-5.199973851719383,3.996766019710285);
\draw [arrow] (-5.80001078991864,3.0020774626431517) -- (-5.1996960421907845,2.98897771650989);
\draw [arrow] (-5.8000016443658895,2.0008110139653263) -- (-5.199998576855749,2.000754490341379);
\draw [arrow] (-5.800433831213405,1.0131660273370575) -- (-5.199837951327249,0.9919505770810372);
\draw [arrow] (-4.91,2.18) -- (-3.69,3.99);
\draw [arrow] (-4.8,2.01) -- (-3.69,2.94);
\draw [arrow] (-4.96,1.2) -- (-3.58,3.82);
\draw [arrow] (-4.81,1.07) -- (-3.55,2.82);
\draw [arrow] (-4.75,5.04) ..controls (-3.66,4.71)and(-3,4).. (-2.58,3.21);
\draw [arrow] (-4.75,5.94) ..controls (-3.74,5.33)and(-2.86,4).. (-2.45,3.23);
\draw [arrow] (-4.75,5.94) ..controls (-3.67,5.53)and(-2.82,4.88).. (-2.54,4.23);
\draw [arrow] (-4.75,5.02) ..controls (-3.88,4.87)and(-3.16,4.57).. (-2.69,4.14);
\draw (-3.53,5.5) node[scale=0.6,anchor=center] {${w_2r_2}\beta$};
\draw (-3.53,5.08) node[scale=0.6,anchor=center] {${w_2r_2}\beta$};
\draw (-3.9,4.94) node[scale=0.6,anchor=center] {${w_1r_1}\beta$};
\draw (-3.9,4.45) node[scale=0.6,anchor=center] {${w_1r_1}\beta$};
\draw (-4.06,3.57) node[scale=0.6,anchor=center] {$\beta$};
\draw (-4.06,3.1) node[scale=0.6,anchor=center] {$\beta$};
\draw (-4.47,2.39) node[scale=0.6,anchor=center] {$\beta$};
\draw (-4.47,1.7) node[scale=0.6,anchor=center] {$\beta$};

\draw [arrow] (-7,3.3) -- (-5.22,5.89);
\draw [arrow] (-5,5.75) -- (-5,5.25);
\draw [arrow] (-3.3,4) -- (-2.74,4);
\draw [arrow] (-2.21,4) -- (-1.74,4);
\draw [arrow] (-1.21,4) -- (-0.7,4);

\draw [arrow] (-3.3,3) -- (-2.74,3);
\draw [arrow] (-2.21,3) -- (-1.74,3);
\draw [arrow] (-1.21,3) -- (-0.7,3);
\draw [arrow] (-2.31,3.16) -- (-1.61,3.78);
\draw [arrow] (-2.31,3.8) -- (-1.63,3.19);
\draw [arrow] (-0.32,2.91) -- (0.3,2.2);
\draw [arrow] (-4.81,1.07) -- (0.2,1.94);
\draw[densely dotted](-8,4.63) -- (1,4.63);
\draw[densely dotted](-8,5.63) -- (1,5.63);
\draw (-7.5,5.8) node[scale=0.6, anchor=center] {Layer $2$};
\draw (-7.5,4.8) node[scale=0.6, anchor=center] {Layer $1$};
\draw (-7.5,3.5) node[scale=0.6, anchor=center] {Layer $0$};
\draw[decoration={brace,mirror,raise=5pt},decorate]
  (-6.68,0.63) -- node[below=6pt] {\tiny Stage $-1$} (-5.54,0.63);
\draw[decoration={brace,mirror,raise=5pt},decorate]
  (-5.45,0.63) -- node[below=6pt] {\tiny Stage $0$} (-3,0.63);
  \draw[decoration={brace,mirror,raise=5pt},decorate]
  (-2.91,0.63) -- node[below=6pt] {\tiny Stage $1$} (0,0.63);
\draw (-6.6,3.61) node[anchor=center] {${\mathsmaller\alpha}$};
\draw (-6.6,3.09) node[anchor=center] {${\mathsmaller\alpha}$};
\draw (-6.6,2.8) node[anchor=center] {${\mathsmaller\alpha}$};
\draw (-6.6,2.26) node[anchor=center] {${\mathsmaller\alpha}$};
\draw (-6.6,2.26) node[anchor=center] {${\mathsmaller\alpha}$};
\draw (-5.65,4.09) node[scale=0.5,anchor=center] {$+ \infty$};
\draw (-5.65,3.09) node[scale=0.5,anchor=center] {$+ \infty$};
\draw (-5.65,2.09) node[scale=0.5,anchor=center] {$+ \infty$};
\draw (-5.65,1.09) node[scale=0.5,anchor=center] {$+ \infty$};
\draw (0.14,2.64) node[scale=0.5,anchor=center] {$+ \infty$};
\draw (-1,1.85) node[scale=0.5,anchor=center] {$+ \infty$};
\draw (-6.14,4.78) node[scale=0.7,anchor=center] {$F$};
\draw (-4.82,5.59) node[scale=0.5,anchor=center] {$F_{1}$};
\draw (-3.14,4.09) node[scale=0.5,anchor=center] {$+ \infty$};
\draw (-3.14,3.09) node[scale=0.5,anchor=center] {$+ \infty$};
\draw (-2.07,4.09) node[scale=0.5,anchor=center] {$+ \infty$};
\draw (-2.07,3.09) node[scale=0.5,anchor=center] {$+ \infty$};
\draw (-1.14,4.09) node[scale=0.5,anchor=center] {$\alpha$};
\draw (-1.14,3.09) node[scale=0.5,anchor=center] {$\alpha$};
\draw (-2.21,3.5) node[scale=0.5,anchor=center] {$\beta '$};
\draw (-2.21,3.34) node[scale=0.5,anchor=center] {$\beta '$};

\filldraw[red] (-4.89,4)--(-4.68,4.18)--(-4.75,4.26)--(-4.95,4.08)--cycle;
\filldraw[red] (-4.94,4.19)--(-4.86,4.25)--(-4.71,4.07)--(-4.78,3.99)--cycle;
\filldraw[red] (-4.89,3)--(-4.68,3.18)--(-4.75,3.26)--(-4.95,3.08)--cycle;
\filldraw[red] (-4.94,3.19)--(-4.86,3.25)--(-4.71,3.07)--(-4.78,2.99)--cycle;
\end{tikzpicture}
}
\caption{An example information flow diagram for $M=2$, $n_1=n_2=1$. For short-hand notation, only subscripts are used. For instance, $x_{I{n_1}}^0, x_{Coo{p_{5,1}}}^1$ are denoted by  $I{n_1}, Coo{p_{5,1}}$.} \label{figflow}
\vspace{-6mm}
\end{figure}
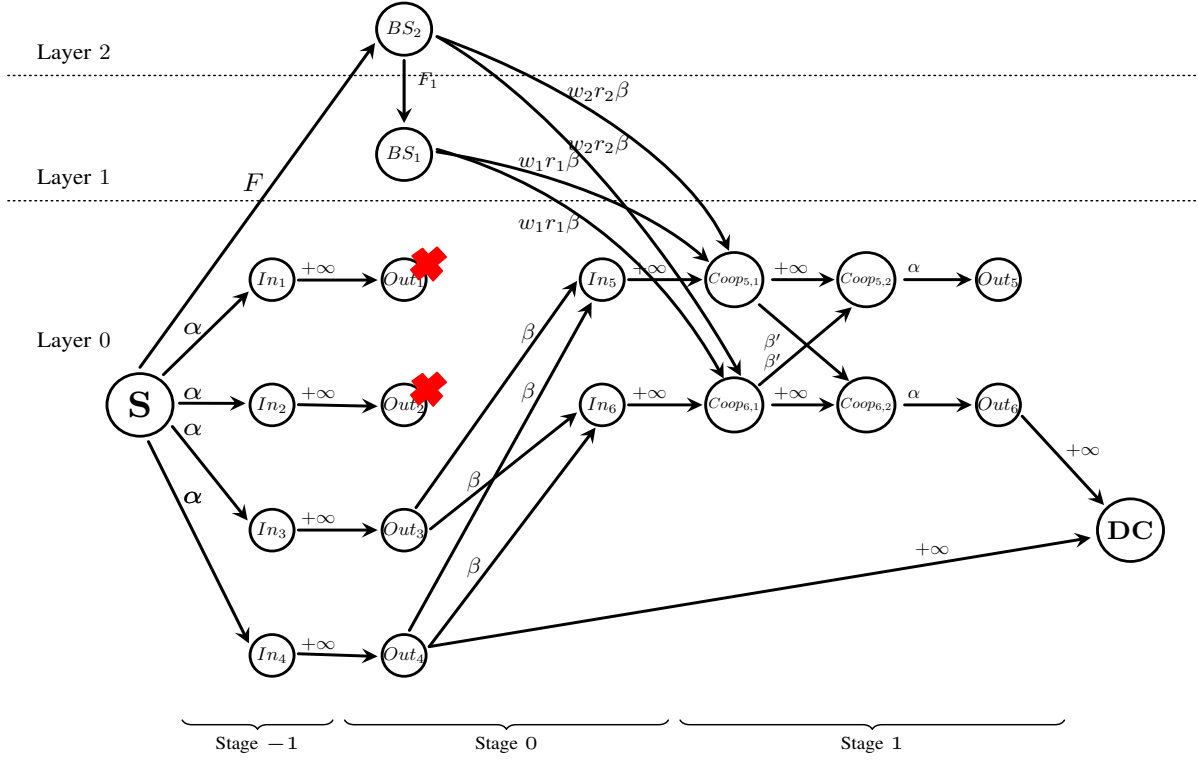

In our repair configuration, the initial data repair stage involves contacting $d$ other local live nodes and downloading $\beta$ symbols from each through edges $x_{Ou{t_j}}^k \to x_{I{n_{j'}}}^i$ for $k < i$ and $j \not= j'$.
In the next phase, the newcomer contacts $M$ above layers (all $n_l$ nodes in them) and download $\beta^{\prime\prime}_l$ symbols from the $l$-th layer whereby each layer only stores $F_l \geq b_l \beta$ symbols for some $b_l\in\mathbb{R}$ and $j$-th node in $l$-th layer is associated with cost $w_{l} \geq 1$ per symbol download. In the final phase, all newcomers undergo a joint local coordination by downloading $\beta^{\prime}$ symbols from at most  $t-1$ other failed nodes within the same local cell through the edges $x_{Coo{p_{j,1}}}^i \to x_{Coo{p_{j',2}}}^i$ for $j \not= j^\prime$. The newcomer stores $\alpha$ symbols, and this is represented
by a directed edge from $x_{Coo{p_{j,2}}}^i$ to $x_{Ou{t_j}}^i$ with capacity $\alpha$. To successfully reconstruct the file, a data collector is connected to $k$ alive “out” nodes with
distinct indices, but not necessarily from the same stage, by $k$
infinite-capacity edges. 
We  finally note  that  all  previous  works would be a special case of this general description.

\subsection{Optimization Problem}

Without loss of generality, we shall assume each layer contains a single base station. Moreover, similar to previous works \cite{Calis2016}, we introduce auxiliary variables $r_l$ to be able to express $\beta_l^{\prime\prime}$ in terms of $\beta$ i.e., $\beta^{\prime\prime}_l =  r_l\beta$. We characterize the link constraints by bounding the $r_l$ i.e. $0 \leq r_l \leq b_l$ for some fixed $b_l \in \mathbb{R}$. Based on the regeneration description, the total repair bandwidth cost per  failed node (${\gamma_c}$) is obtained as follows,
\begin{equation}\label{gamma_c_general}
  {\gamma _c}(\mathbf{s}) = d\beta  + (t - 1)\beta^{'} + \sum\limits_{l = 1}^M {{s_l}{w_l}{r_l}\beta }
\end{equation}
where for some $\rho \in \{0,1,\dots, M\}$ and entries $s_l$, which defines the number of used BSs with
\begin{equation}\label{s_def}
  \mathbf{s}_\rho = \left\{ {(\underbrace {1,...,1}_\rho,\underbrace {0,...,0}_{M - \rho}):\sum\limits_{l = 1}^M {{s_l}}  = \rho} \right\}.
\end{equation}

The minimum cut of the flow diagram $G$ induces the following constraint on the file size
\begin{align}\label{thrm1_1}
\mathop {\min }\limits_{r_l, \mathbf{u} \in P} \Bigg\{
{u_0}\alpha  + \sum\limits_{j = 1}^g {{u_j}(d^\prime - \sum\limits_{i = 0}^{j - 1} {{u_i}} )\beta }
 + {u_j}\left( {t - {u_j}} \right)\beta^{\prime} \Bigg\} \geq F
\end{align}
where $P = \{\mathbf{u} = (u_i)_{0 \leq i < g}:1\leq u_i\leq t$ and $\sum_i^g u_i = k\}$ and $d' = d + \sum\limits_{l = 1}^M s_l r_l$. Here, $u_i$ is the number of nodes contacted in each repair group
of size $t$ during the repair and $g$ is the number of repairing stages. Note that unlike $d$, $d^\prime$ is real valued. Therefore, slightly modifying the steps given in \cite{Shum2013} we can obtain the file size constraint in \eqref{thrm1_1}, which can further be expressed for $g=0,...,k$ as,
\begin{equation} \label{thrm1_2}
F \leq
\begin{cases} 
      \Phi_{g,\mathbf{s}_\rho} + g(\beta/2-\beta^\prime) & \text{if  $\beta \geq 2 \beta^\prime$}, \\
      \Phi_{g,\mathbf{s}_\rho} + \psi_{g,t}(\beta/2-\beta^\prime) & \text{otherwise}
   \end{cases}
\end{equation}
where $\Phi_{g,\mathbf{s}} = \alpha (k - g) + g\beta (d' - k + \frac{{g}}{2}) + \beta^\prime gt$ used as a shorthand notation and $\psi_{g,t} = \lfloor g/t\rfloor t^2 + (g - \lfloor g/t\rfloor t)^2$.




For a given file size $F$ and system parameters $\alpha, d, k, t, \mathbf{w}= (w_l)_{1\leq l \leq M}, \mathbf{b}=(b_l)_{1\leq l \leq M}$, the bandwidth cost-storage trade-off is the solution to the following constrained optimization problem
\begin{equation}\label{optim_prob}
  \mathop {\min }\limits_{\mathbf{s},{r_1},...,{r_M},\beta ,\beta '} \left\{ {d\beta  + (t - 1)\beta ' + \sum\limits_{l = 1}^M {{s_l}{w_l}{r_l}} \beta } \right\}
\end{equation}
subject to the conditional constraint in \eqref{thrm1_2} and
\begin{align}\label{constraint_optim}
&0 \le \beta ,\beta ' \le \frac{F}{d}\\
& {{b_l} \in \mathbb{R}, {r_l} \in [0,b_l] }, \quad l = 1,...,M\\
&\mathbf{s}_\rho = [{s_1},...,{s_M}],{s_l} \in \{ 0,1\}
\end{align}
Due to space constraints, we present a solution of the optimization problem using a few selected parameters in Fig. \ref{fig_k6d9t3}. As can be seen the new trade-off curve, which uses different number of base stations at different operating points, allows better achievable region as compared to two major past works. 

\begin{figure}[t!]
\includegraphics[width=0.7\columnwidth,height=3in]{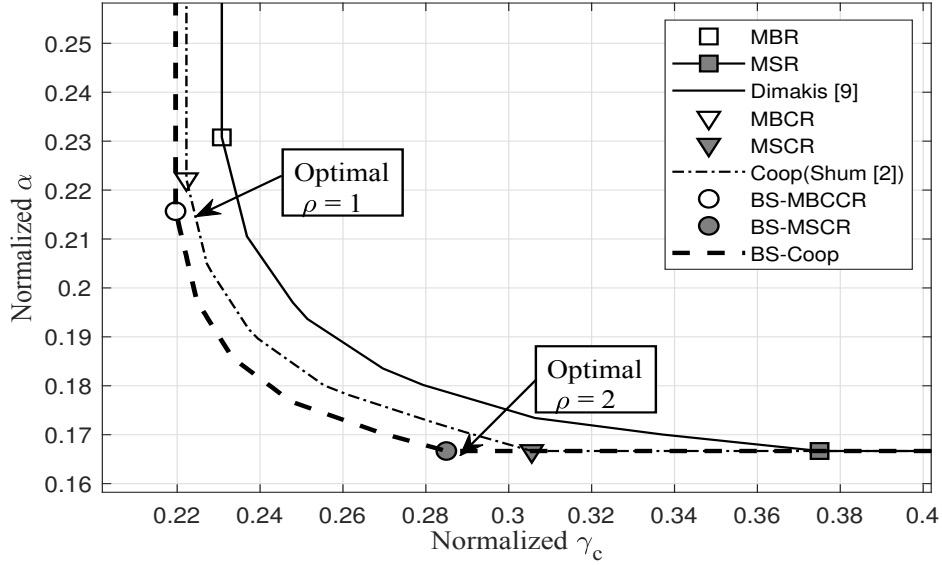}
\centering
\caption{Storage versus repair bandwidth cost trade off,for $k=6$, $d=9$, $t=3$, $b=[1,0.75,0.5,0.25] $ and $w=[1.2,1.4,1.8,1.84]$.} \label{fig_k6d9t3}
\vspace{-4mm}
\end{figure}

\section{BS-MSCR and BS-MBCCR operating points}

Two operating points of this trade-off curve are of special importance namely, minimum storage and bandwidth cost operating points. 
\begin{theorem} \label{thm31}
For a given set values of $r_1,...,r_M$ in a BS-assisted cooperative repair scenario, the minimum storage and minimum repair bandwidth cost regeneration, namely BS-MSCR and BS-MBCCR points on the tradeoff curve are characterized as
\begin{align}\label{bs_points}
&\left( {{\gamma _{\textrm{BS-MSCR}}},{\alpha _{\textrm{BS-MSCR}}}} \right) = 
\left( {\frac{{F(d + \sum\limits_{l = 1}^M {{s_i}{w_i}{r_i}}  + t - 1)}}{{k(d + \sum\limits_{l = 1}^M {{s_i}{r_i}}  + t - k)}},\frac{F}{k}} \right)\\
&\left( {{\gamma _{\textrm{BS-MBCCR}}},{\alpha _{\textrm{BS-MBCCR}}}} \right) =\nonumber\\
&\left( {\frac{{F(2(d + \sum\limits_{l = 1}^M {{s_i}{w_i}{r_i}} ) + t - 1)}}{{k(2(d + \sum\limits_{l = 1}^M {{s_i}{r_i}} ) + t - k)}},\frac{{F(2(d + \sum\limits_{l = 1}^M {{s_i}{r_i}} ) + t - 1)}}{{k(2(d + \sum\limits_{l = 1}^M {{s_i}{r_i}} ) + t - k)}}} \right)
\end{align}
\end{theorem}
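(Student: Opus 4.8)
\emph{Proof plan.} The trade-off curve is, by definition, the lower boundary in the $(\gamma_c,\alpha)$-plane of the set of points admitting some $\beta,\beta'$ satisfying \eqref{thrm1_2} for every $g\in\{0,1,\dots,k\}$ together with \eqref{constraint_optim}, so the plan is to read off its extreme-$\alpha$ and extreme-$\gamma_c$ points directly from \eqref{thrm1_2}. I would first fix notation: write $\widetilde d:=d+\sum_{l=1}^{M}s_lw_lr_l$ and $d':=d+\sum_{l=1}^{M}s_lr_l$, so by \eqref{gamma_c_general} the repair cost is the linear functional $\gamma_c=\widetilde d\,\beta+(t-1)\beta'$, whereas $r_1,\dots,r_M$ and $\mathbf{s}_\rho$ enter \eqref{thrm1_2} only through $d'$ (it sits inside $\Phi_{g,\mathbf{s}}$, hence in both branches). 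The structural observation I would exploit is that for fixed $r_l$ the feasible region in $(\beta,\beta',\alpha)$ is \emph{identical} to that of the cooperative-regeneration trade-off of \cite{Shum2013} with the integer $d$ replaced by the real $d'$, so everything reduces to a two-variable linear program; the only new feature is the \emph{weighted} objective $\widetilde d\,\beta+(t-1)\beta'$, and here I would record $\widetilde d\ge d'$ (each $w_l\ge1$). I would also prove at the outset two bounds on $\psi_{g,t}$: writing $g=qt+s$ with $0\le s<t$, $g^2-\psi_{g,t}=qt\bigl((q-1)t+2s\bigr)\ge0$, and at $g=k$ also $\psi_{k,t}-k=qt(t-1)+s(s-1)\ge0$ and $\psi_{k,t}-kt=s(s-t)\le0$; i.e.\ $\psi_{g,t}\le g^2$ and $k\le\psi_{k,t}\le kt$.

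\textbf{BS-MSCR.} Instantiating \eqref{thrm1_2} at $g=0$ gives $F\le k\alpha$, so $\alpha\ge F/k$ on the whole curve and the minimum-storage point has $\alpha_{\textrm{BS-MSCR}}=F/k$; it remains to minimize $\widetilde d\,\beta+(t-1)\beta'$ under \eqref{thrm1_2} with $\alpha=F/k$. For achievability I would test the symmetric choice $\beta=\beta'$ (so $\beta<2\beta'$, the second branch is active), substitute, and use $\psi_{g,t}\le g^2$ to collapse the $g$-th constraint to $\tfrac{F}{k}\le\beta\bigl(d'+t-k+\tfrac{g^2-\psi_{g,t}}{2g}\bigr)$; the tightest instances are those with $\psi_{g,t}=g^2$ (e.g.\ $1\le g\le t$), forcing $\beta=\beta'=\tfrac{F}{k(d'+t-k)}$ and hence $\gamma_c=(\widetilde d+t-1)\beta=\tfrac{F(\widetilde d+t-1)}{k(d'+t-k)}$, after which I would check the other $g$ are slack. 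For the matching lower bound I would use just two constraints: the $g=t$ instance (or $g=k$ when $t>k$), which in either branch forces $\beta\ge\tfrac{F}{k(d'+t-k)}$, and the $g=1$ instance $\tfrac{F}{k}\le(d'-k+1)\beta+(t-1)\beta'$ (branch-independent, as $\psi_{1,t}=1$); eliminating $\beta'$ from the latter and using $\widetilde d-(d'-k+1)=(\widetilde d-d')+(k-1)\ge0$ gives $\gamma_c\ge[\widetilde d-(d'-k+1)]\beta+\tfrac{F}{k}$, which is increasing in $\beta$, so together the two constraints yield $\gamma_c\ge\tfrac{F(\widetilde d+t-1)}{k(d'+t-k)}$.

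\textbf{BS-MBCCR.} As $\gamma_c$ is independent of $\alpha$ and raising $\alpha$ only relaxes the constraints with $g<k$ (their $\alpha$-coefficient $k-g\ge0$), the minimum cost equals the minimum of $\widetilde d\,\beta+(t-1)\beta'$ over the single $\alpha$-free constraint from \eqref{thrm1_2} at $g=k$, namely $F\le k\beta\bigl(d'-\tfrac{k}{2}\bigr)+k\beta' t+c_k$ with $c_k=k(\beta/2-\beta')$ if $\beta\ge2\beta'$ and $c_k=\psi_{k,t}(\beta/2-\beta')$ otherwise. Using $k\le\psi_{k,t}\le kt$ I would show this program is optimized on the ray $\beta=2\beta'$, where the branches coincide and the constraint becomes $F\le k\beta'(2d'+t-k)$; hence $\beta'=\tfrac{F}{k(2d'+t-k)}$, $\beta=2\beta'$, and $\gamma_c=(2\widetilde d+t-1)\beta'=\tfrac{F(2\widetilde d+t-1)}{k(2d'+t-k)}$, which is $\gamma_{\textrm{BS-MBCCR}}$. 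Finally $\alpha_{\textrm{BS-MBCCR}}$ is the least $\alpha$ keeping this $(\beta,\beta')$ feasible for all $g<k$; substituting $\beta=2\beta'$, the $g$-th such constraint simplifies to $\alpha\ge\beta'(2d'+t-k+g)$, tightest at $g=k-1$, giving $\alpha=\beta'(2d'+t-1)=\tfrac{F(2d'+t-1)}{k(2d'+t-k)}$, as claimed.

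\textbf{Main obstacle.} The algebra reducing \eqref{thrm1_2} to these one-line constraints is routine; the real content is twofold. One must first verify that the two candidate triples $(\beta,\beta',\alpha)$ satisfy \emph{all} $g$-constraints simultaneously, which hinges on the bounds $\psi_{g,t}\le g^2$ and $k\le\psi_{k,t}\le kt$. The more delicate part is the converse --- that the linear program in $(\beta,\beta')$ (and in $\alpha$, for MBCCR) is optimized exactly at $\beta=\beta'$, respectively $\beta=2\beta'$ --- and here the subtle point is tracking which branch of the conditional constraint \eqref{thrm1_2} is active near the optimum ($\beta<2\beta'$ at the MSCR point, $\beta=2\beta'$ at the MBCCR point): picking the wrong branch produces a spuriously smaller cost that is in fact excluded once the full family of $g$-constraints, in particular $g=1$, is imposed.
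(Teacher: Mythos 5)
Your proposal is correct where it matters and lands on exactly the operating points of the theorem, but it is a genuinely more self-contained route than the paper's. The paper identifies BS-MBCCR as the intersection of the ray $\beta=2\beta'$ with the $g=k$ cut constraint and BS-MSCR as the intersection of $\beta=\beta'$ with that constraint at $\alpha=F/k$, justifying this identification only by appeal to the reasoning of \cite{Shum2013}, and it reads the storage coordinate of BS-MBCCR off a storage-at-least-total-download side condition ($\alpha \ge d\beta+(t-1)\beta'+\sum_l s_l r_l\beta$, written in the paper with the trailing $\beta$ dropped) rather than off the cut constraints. You instead treat the problem as an explicit linear program: achievability is verified against the whole family $g=0,\dots,k$ via the bounds $\psi_{g,t}\le g^2$ and $k\le\psi_{k,t}\le kt$ (your algebra for these, and for the reduction of the $g$-th constraint at $\beta=\beta'$ and at $\beta=2\beta'$, is correct); the MSCR converse is extracted from just the $g=1$ and $g=t$ constraints together with $\widetilde d:=d+\sum_l s_lw_lr_l\ge d'$; and for MBCCR you note that only the $\alpha$-free $g=k$ constraint limits the cost and then recover $\alpha_{\textrm{BS-MBCCR}}$ from the $g=k-1$ constraint (your simplification $\alpha\ge\beta'(2d'+t-k+g)$ is right and is tight at $g=k-1$). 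This buys an actual optimality argument where the paper only computes the intersection point and cites prior work for why that point is optimal.

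Two caveats, both at steps you defer. First, the claim ``the $g=k$ program is optimized on the ray $\beta=2\beta'$'' is not unconditionally true: in the branch $\beta<2\beta'$ the constraint reads $F\le\bigl(kd'-\tfrac{k^2-\psi_{k,t}}{2}\bigr)\beta+(kt-\psi_{k,t})\beta'$, and when $t\nmid k$ (so $kt-\psi_{k,t}>0$) and $\widetilde d$ is much larger than $d'$ (very large weights $w_l$ with the $r_l$ held fixed), trading $\beta$ for $\beta'$ lowers the cost and the optimum leaves the ray. The claim does hold whenever $\widetilde d\,(kt-\psi_{k,t})\le(t-1)\bigl(kd'-\tfrac{k^2-\psi_{k,t}}{2}\bigr)$, which is automatic for $t\mid k$ and holds for moderate weights, in particular in the regime where BS assistance is beneficial (the condition of Proposition~\ref{prop:1}); the paper's proof implicitly assumes the same by invoking \cite{Shum2013}, so this is a limitation of the theorem's stated generality as much as of your argument, but you should state the condition rather than assert the reduction outright. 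Second, your MSCR-converse fallback for $t>k$ (using $g=k$ in place of $g=t$) does not force $\beta\ge F/(k(d'+t-k))$ in the branch $\beta<2\beta'$, because the $\beta'$ coefficient $t-k>0$ no longer cancels there; as written your converse is complete only for $t\le k$, which is the regime of the paper's examples.
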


\begin{proof}
At the BS-MBCCR operating point, similar to the reasoning given in \cite{Shum2013}, for a given set of values of $r_1,...,r_M$, the minimum repair bandwidth cost is the intersection between $\beta=2\beta^\prime$ and $\beta-\beta^\prime$ planes corresponding to the constraints given in \eqref{thrm1_2} for $g=k$ and $\alpha  \ge d\beta  + (t - 1)\beta^{\prime} + \sum\limits_{l = 1}^M {{s_l}{r_l}}$. To find the intersection, we substitute $\beta=2\beta^\prime$ in the $\beta-\beta^\prime$ plane to get
\begin{equation}
  \alpha (k - k) + k\beta \left(d + \sum\limits_{l = 1}^M {{s_l}{r_l}}  - k + \frac{{k + 1}}{2}\right) + \frac{k\beta }{2}(t - 1) = F
\end{equation}
which subsequently leads to
\begin{equation}
  \beta  = \frac{{2F}}{{k\left( {2(d + \sum\limits_{l = 1}^M {{s_l}{r_l}} ) + t - k} \right)}}.
\end{equation}
Thus, by substituting $\beta$ and $\beta^{\prime}=\beta/2$ in the expression of $\gamma_c(\mathbf{s})$,
the result follows. As for the BS-MSCR operating point, let $g=lt+q$, where $l$ is an integer and $0\le q \le t$. Note that $l=0$ corresponds to the minimum storage point i.e., $\alpha  = \frac{F}{k}$. Taking into this consideration, for a given set of values of $r_1,...,r_M$, the intersection between $\beta=\beta^\prime$ with $\beta-\beta^\prime$ plane subject to the constraint in \eqref{thrm1_2} gives us the BS-MSCR point.
\end{proof}

Note that in order to calculate BS-MSCR and BS-MBCCR operating points, we still need to find the set $\{r_l\}$. Next proposition shows that minimum bandwidth cost is indeed attained at the upper bounds. 

\begin{proposition} \label{prop:1}
  For the given values of $\rho$, $\mathbf{s}_\rho = [\underbrace {1,\dots,1}_\rho ,\underbrace {0,\dots,0}_{M - \rho }]$ and $0 \le r_l \le b_l$ for $l=1,...,M$, satisfying 
  \begin{align}
      \sum\limits_{l = 1}^\rho  {{w_l}{r_l}}  \le {\bar w_t}\sum\limits_{l = 1}^\rho  {{r_l}}
  \end{align}
  with ${{\bar w}_t} \in \left\{ {\frac{{d + t - 1}}{{d + t - k}},\frac{{2d + t - 1}}{{2d + t - k}}} \right\}$, repair bandwidth cost ${\gamma _c} \in \left\{ {{\gamma_{\textrm{BS-MSCR}}},{\gamma_{\textrm{BS-MBCCR}}}} \right\}$ is minimized at the upper bounds $[b_1,...,b_\rho]$.
\end{proposition}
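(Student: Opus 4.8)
\emph{Overall plan and reduction.} I would first collapse the two operating points into a single computation and then exploit the fact that, as a function of $(r_1,\dots,r_M)$, the repair bandwidth cost is a ratio of two affine maps, so its behaviour on the box of feasible $r$'s is governed by a few sign conditions. Concretely, substituting $\mathbf{s}_\rho=[\underbrace{1,\dots,1}_\rho,\underbrace{0,\dots,0}_{M-\rho}]$ into Theorem~\ref{thm31} and writing $Z_w=\sum_{l=1}^{\rho}w_lr_l$, $Z=\sum_{l=1}^{\rho}r_l$, both $\gamma_{\textrm{BS-MSCR}}$ and $\gamma_{\textrm{BS-MBCCR}}$ acquire the common shape
\[
\gamma_c=\frac{F}{k}\cdot\frac{A+c_0Z_w}{B+c_0Z},
\]
with $(A,B,c_0)=(d+t-1,\,d+t-k,\,1)$ at the BS-MSCR point and $(A,B,c_0)=(2d+t-1,\,2d+t-k,\,2)$ at the BS-MBCCR point. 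In both cases $B>0$ (reconstruction forces $d+t>k$), $A\ge B$ since $A-B=k-1\ge 0$, and the threshold in the hypothesis is exactly $\bar w_t=A/B$. Hence it suffices to prove the statement once for a generic admissible triple $(A,B,c_0)$.

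\emph{Monotonicity argument.} On the box $\mathcal{B}=\prod_{l=1}^{\rho}[0,b_l]$ the denominator $B+c_0Z\ge B>0$ never vanishes, so $\gamma_c$ is a continuous linear-fractional function of $r$: a minimiser on the compact set $\mathcal{B}$ exists, $\gamma_c$ is monotone along every segment contained in $\mathcal{B}$, and a minimiser can be taken at a vertex of $\mathcal{B}$. Freezing all coordinates but $r_j$ gives $\gamma_c=\frac{F}{k}\cdot\frac{p+c_0w_jr_j}{q+c_0r_j}$ with $p=A+c_0\sum_{i\ne j}w_ir_i$ and $q=B+c_0\sum_{i\ne j}r_i$, so $\partial\gamma_c/\partial r_j$ carries the sign of $w_jq-p$; thus $\gamma_c$ is non-increasing in $r_j$ precisely when $w_j\le p/q$. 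The goal is then to show, from $Z_w\le\bar w_tZ$ together with $1\le w_1\le\dots\le w_\rho$, that $w_j\le p/q$ holds throughout $\mathcal{B}$ for every $j\le\rho$; this makes $\gamma_c$ non-increasing in each coordinate, hence minimised at $r=(b_1,\dots,b_\rho)$, and the reduction above transfers the conclusion to both $\gamma_{\textrm{BS-MSCR}}$ and $\gamma_{\textrm{BS-MBCCR}}$ simultaneously. Two concrete facts drive the estimate: applying the mediant inequality to $A/B$ and $Z_w/Z$, the hypothesis already yields $\gamma_c(r)\le\frac{F}{k}\bar w_t=\gamma_c(\mathbf{0})$ on all of $\mathcal{B}$ (invoking the base stations never hurts), and along the ray $r=\tau(b_1,\dots,b_\rho)$, $\tau\in[0,1]$, the derivative $\mathrm{d}\gamma_c/\mathrm{d}\tau$ is proportional to $c_0B\bigl(Z_w(b)-\bar w_tZ(b)\bigr)\le 0$, which is exactly the hypothesis evaluated at $r=b$, so $\gamma_c$ is non-increasing along that ray and the upper corner beats the origin.

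\emph{Expected obstacle.} The delicate point — the one I would spend the most care on — is this last implication: the hypothesis is a single inequality about the aggregate sums $Z_w,Z$, whereas coordinate-wise monotonicity asks for $w_j\le p/q$ with $p$ and $q$ assembled from the \emph{other} coordinates, so one must verify that raising the cheaper links cannot drag $p/q$ below the largest weight $w_\rho$. The monotone ordering $w_1\le\dots\le w_\rho$ is what reduces the check to $j=\rho$ and keeps the argument tractable; handling this bookkeeping cleanly — rather than the routine differentiations — is where the real effort lies.
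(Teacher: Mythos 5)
Your reduction of both operating points to the common linear--fractional form $\gamma_c=\frac{F}{k}\cdot\frac{A+c_0Z_w}{B+c_0Z}$ and your ray computation are correct, and in fact the ray argument you list as auxiliary ``fact (b)'' is essentially the entirety of the paper's own proof: the paper writes $\sum_l w_lr_l=\tau\bar w_t\sum_l r_l$ with $\tau\in[0,1]$, rewrites $\gamma_c(\mathbf{s}_\rho)$ as $\frac{\bar w_t}{k}\cdot\frac{2d-k+t+2\tau\Sigma}{2d-k+t+2\Sigma}$ with $\Sigma=\sum_l r_l$, and notes this is decreasing in $\Sigma$ \emph{for fixed $\tau$}, concluding that one should set $r_l=b_l$. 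What you add on top of this --- the program of establishing genuine coordinate-wise monotonicity $w_j\le p/q$ throughout the box so that the corner is a global minimizer --- is precisely the step you flag as the ``expected obstacle'' and never carry out, so as submitted the proof is incomplete at the one step that matters. Worse, that step cannot be completed from the stated hypothesis: the aggregate inequality $\sum_{l\le\rho}w_lr_l\le\bar w_t\sum_{l\le\rho}r_l$ does not force $w_j\le p/q$ on the box, and the corner need not be the minimizer. Concretely, at the BS-MBCCR point with $d=k=t=2$ (so $\bar w_t=5/4$), $\rho=2$, $b_1=b_2=1$, $w_1=1$, $w_2=1.4$, the hypothesis holds at $b=(1,1)$ since $2.4\le2.5$, yet $\gamma_c\cdot\frac{k}{F}=\frac{5+2Z_w}{4+2Z}$ equals $1.225$ at $(1,1)$ and $7/6\approx1.167$ at $(1,0)$; indeed at $r=(1,r_2)$ one has $p/q=7/6<w_2$, so $\gamma_c$ is increasing in $r_2$ there, and the ordering $w_1\le\dots\le w_\rho$ cannot repair this.

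So the portion of your proposal that is provable (mediant bound $\gamma_c(r)\le\gamma_c(\mathbf{0})$ where the hypothesis holds, and monotonicity along the segment from $\mathbf{0}$ to $b$, i.e.\ among points sharing the same $\tau$) coincides with what the paper actually establishes; neither argument yields global corner optimality over the box under the aggregate hypothesis alone. The statement is meant to be used together with Algorithm~\ref{bs_number_cal}, which admits the $i$-th base station only when $w_i$ does not exceed a running threshold computed from the previously admitted stations; it is under per-station conditions of that kind (not the single aggregate inequality) that a coordinate-wise argument of the type you sketch has a chance to close. To finish, you must either assume such strengthened per-station inequalities, or weaken the conclusion to the ray/fixed-$\tau$ comparison that your fact (b) and the paper's $\tau$-argument actually prove.
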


\begin{proof}
In BS-MBCR case, let us assume that we happen to use $\rho$ base stations. 
In other words,$\{ {r_1},...,{r_\rho }\}$ are non-zero values satisfying the condition,
 \begin{equation}\label{w_inequality}
  \sum\limits_{l = 1}^\rho  {{w_l}{r_l}}  \le  \frac{{2d + t - 1}}{{2d - k + t}} \sum\limits_{l = 1}^\rho  {{r_l}}
\end{equation}
Note that in order for base stations to help reduce the bandwidth cost, we would need ${\gamma _c}(\mathbf{s}_\rho ) \le {\gamma _c}(\mathbf{0})$ which leads to equation \eqref{w_inequality}. Here $\mathbf{0}$ denotes the all-zero vector. 
Accordingly, inequality \eqref{w_inequality} can be rewritten as follows:
\begin{equation}\label{w_equality}
  \sum\limits_{l = 1}^\rho  {{w_l}{r_l}}  = \tau {\bar w_t}\sum\limits_{l = 1}^\rho  {{r_l}}
\end{equation}
where $0 \le \tau  \le 1$. If we reformulate ${\gamma _c}(\mathbf{s}_\rho )$  in terms of $\rho$, repair bandwidth cost can be rewritten as,
\begin{equation}\label{gamma_c_mbr}
  {\gamma _c}(\mathbf{s}_\rho ) = \frac{{{{\bar w}_t}}}{k}\frac{{2d - k + t + 2\tau \sum\limits_{l = 1}^\rho  {{r_l}} }}{{2d - k + t + 2\sum\limits_{i = 1}^\rho  {{r_i}} }}. 
\end{equation}
Note that equation \eqref{gamma_c_mbr} is inversely related to $\sum_{l = 1}^\rho  {r_l}$ (it can be easily shown by taking the derivative). In other words, in order to minimize ${\gamma  _c}(\mathbf{s}_\rho)$, it is sufficient to maximize the term $\sum_{l = 1}^\rho  {{r_l}}$. A similar proof following the same line of reasoning can be given for BS-MSCR point.
\end{proof}

The implication of the result of Proposition \ref{prop:1} is that we can simply obtain BS-MSCR and BS-MBCCR points by replacing $r_i$ with $b_i$ as,
  \begin{align}\label{optim_bs_point1}
&\left( {{\gamma_{\textrm{BS-MSCR}}},{\alpha _{\textrm{BS-MSCR}}}} \right) = \nonumber\\
& \ \ \ \ \ \ \ \ \ \ \ \ \ \ \ \ \ \left( {\frac{{F(d + \sum\limits_{l = 1}^{\rho_{\textrm{BS-MSCR}}} {{w_i}{b_i}}  + t - 1)}}{{k(d + \sum\limits_{l = 1}^{\rho_{\textrm{BS-MSCR}}} {{b_i}}  + t - k)}},\frac{F}{k}} \right)
\end{align}
\resizebox{1.02\linewidth}{!}{
\begin{minipage}{\linewidth}
\vspace{-1.5mm}
 \begin{align}\label{optim_bs_point2}
&\left( {{\gamma_{\textrm{BS-MBCCR}}},{\alpha_{\textrm{BS-MBCCR}}}} \right) =\nonumber\\
&\left( {\frac{{F(2(d + \sum\limits_{l = 1}^{\rho_{\textrm{BS-MBCCR}}} {{w_i}{b_i}} ) + t - 1)}}{{k(2(d + \sum\limits_{l = 1}^{{\rho_{\textrm{{BS-MBCCR}}}}} {{b_i}} ) + t - k)}},\frac{{F(2(d + \sum\limits_{l = 1}^{\rho_{\textrm{BS-MBCCR}}} {{b_i}} ) + t - 1)}}{{k(2(d + \sum\limits_{l = 1}^{\rho_{\textrm{BS-MBCCR}}} {{b_i}} ) + t - k)}}} \right) 
\end{align}
\end{minipage}
}

In Algorithm \ref{bs_number_cal}, we provide a linear time search algorithm to find $\rho_{\textrm{BS-MSCR}}$ and $\rho_{\textrm{BS-MBCCR}}$ i.e., the optimal number of  contributing BSs, $\rho_{min}$ for both BS-MSCR and BS-MBCCR operating points. We use $p_t$ to indicate whether the results corresponds to BS-MSCR ($p_t=1$) or BS-MBCCR $p_t=0$).  

\section{An explicit construction for Exact BS-Assisted MSCR code}

Exact repair has many advantages such as reduced maintenance compared to functional repair as indicated in previous works \cite{Shum2013}. Exact MSR and MSCR code constructions have been established for a given arbitrary code parameters \cite{Barg17}. In this section, we provide a family of regeneration codes for BS-assisted cooperative repair with parameters $\rho$ and $d=k \leq n-t$. For simplicity, we also assume $b_l=1$. Similar to previous constructions, we base our construction on maximal-distance separable (MDS) code of length $n$ with dimension $k$, consisting of symbols from $GF(p^q)$ with prime $p$ and $n \leq p^q$ for some positive integer $q$. The code presented in subsection II.A would be a special case of the simplified construction we present in this subsection.

Our construction is based on multiple MDS codes with the same $k \times n$ generator matrix\footnote{Any $k \times k$ submatrix of $\mathbf{G}$ is invertible. Thus any data collector connecting a subset of $k$ nodes would be able to reconstruct the file.} $\mathbf{G} = [\mathbf{g}_1, \mathbf{g}_2, \dots, \mathbf{g}_n]$ where $\mathbf{g}_i$ represents the $i$-th column of $\mathbf{G}$.  The file of size $F$ is divided into $k(t+\rho)$ chunks where each one of them is considered to be from $GF(p^q)$. We structure these chunks as a $(t+\rho) \times k$ massage matrix $\mathbf{M}$. Thus, we compute $\mathbf{MG}$ and make the node $j$ store $j$-th column of $\mathbf{MG}$ for $j=1,2,\dots,n$. Suppose we represent each row of $\mathbf{M}$ as $\mathbf{m}_1^T,\mathbf{m}_2^T,\dots,\mathbf{m}_{t+\rho}^T$. We have $k$ message symbols in each row represented as a vector $\mathbf{m}_i^T=[m_1 \ m_2 \ \dots \ m_k]$, each $m_i \in GF(p^m)$ encoded into the codeword $\mathbf{m}_i^T\mathbf{G}$. In such a setting, node $j$ would  store $\mathbf{m}_i^T\textbf{g}_j$, for $i=1,2,\dots,t+\rho$.


\begin{algorithm}[t!] 
\begin{algorithmic}[1]
\State \textbf{function}  \textbf{OptBSNumCal}($k,d,t,M,\mathbf{b},\mathbf{w}, p_t$)
\State $\rho_{min} \gets 0$
\For{$i=1:M$}
\State{$\overline{d} \gets d + \sum\limits_{l = 1}^{i - 1} {{w_l}{b_l}}$} \ \ , \ \ {$\overline{b} \gets d + \sum\limits_{l = 1}^{i - 1} {{b_l}}$}
\Comment{$\mathbf{b}=\{b_l\}$}
\State{${{\bar w}_t} \gets p_t\left( \frac{\overline{d}  + t - 1}{\overline{b}  + t - k} \right) + (1 - p_t)\left( {\frac{{2\overline{d} + t - 1}}{{2\overline{b} + t - k}}} \right)$}\Comment{$p_t \in \{ 0,1\}$}
    \If{$w_i > {{\bar w}_t}$}\Comment{$\mathbf{w}=\{w_l\}$}
    \State{$\textbf{break;}$}
    \EndIf
    \State{$\rho_{min} \gets \rho_{min}+1$}
\EndFor
\State{$\textbf{return}\quad\rho_{min}$}
\end{algorithmic}
\caption{Optimal Number of BSs ($\rho_{min}$)}
\label{bs_number_cal}
\end{algorithm}

Suppose that nodes with indices $j_1,j_2,\dots,j_t$ fail/depart the cell and become permanently unavailable. In the first phase of the regeneration, the $l$-th newcomer $j_l$ connects any $d=k$ remaining nodes, say $\pi_l(1), \pi_l(2), \dots, \pi_l(k)$ and downloads $\mathbf{m}_l^T\mathbf{g}_{\pi_l(1)},\mathbf{m}_l^T\mathbf{g}_{\pi_l(2)},\dots, \mathbf{m}_l^T\mathbf{g}_{\pi_l(k)}$ to successfully reconstruct $\mathbf{m}_l^T$. In the subsequent cooperative phase, newcomer $j_l$ computes $\mathbf{m}_l^T \mathbf{g}_{j_h}$ to send to the newcomer $j_h$ with $h \not = l$. So a total of $t-1$ chunks are exchanged by each newcomer in this cooperative phase. After the second phase, $l$-th newcomer $j_l$ has $t$ chunks namely $\mathbf{m}_h^T \mathbf{g}_{j_l}$ for $h=1,\dots,t$. In the final phase of the regeneration, we download the remaining $\rho$ chunks from the available $\rho$ base stations each providing one chunk of information. We first note that in this case the last two phases of the regeneration are interchangeable.
We also note that a chunk of information is of size $\frac{F}{k(t+\rho)}$ and the total number of chunks downloaded for each newcomer is $d+\rho+t-1$ achieving the minimum bandwidth possible.

\section{Conclusions and Future work}

In this work, we have considered the cooperative node repair process of distributed data storage for cellular networks in presence of base stations. Since the symbol download between local nodes and base stations may be more costly and limited, we have introduced the concept of bandwidth cost and link capacities. Furthermore, we have formulated appropriate bounds on the file size and solve the optimization problem. We conjecture that closed form expressions can be found for each operating point in the trade-off curve. Our results indicate that a better bandwidth cost-storage space trade-off may be possible in the presence of base stations. We also realize that the cost notion can be applied to storage as well and a natural extension to this work would be to characterize the bandwidth cost-storage cost trade-off. Although we have provided a  construction for the exact BS-MSCR operating point, it is valid only for specific parameter selections of the system. We leave different code constructions for general parameters and link capacity settings as a future work.

\section*{Acknowledgment}
This study is supported by TUBITAK under grant no 119E235.
\bibliographystyle{IEEEtran}
\bibliography{references}
\end{document}